




\documentclass[sigconf,nonacm]{aamas}

\usepackage{algorithm}
\usepackage{algorithmic}
\usepackage{amsmath}
\usepackage{enumitem}
\usepackage{etoolbox}
\usepackage{bm}
\usepackage{tikz}
\usepackage{multirow}
\usepackage{framed}
\usepackage{mathrsfs}
\usepackage{subfigure}

\usepackage{pifont}       
\usepackage{bbding}       
\usepackage{fontawesome}  
 

%
\usepackage{balance} 
\newtheorem{definition}{Definition}
\newtheorem{theorem}{Theorem}

\newtheorem{proposition}{Proposition}

\usepackage{color-edits}
\addauthor[Yuhan]{yuhan}{blue}
\addauthor[Tian]{tian}{blue}
\addauthor[Junyu]{junyu}{blue}






\title{Combinatorial Diffusion Auction Design} 


\author{Xuanyu Li}
\affiliation{
  \institution{Key Laboratory of Intelligent
Perception and Human-Machine
Collaboration, ShanghaiTech
University}
  \city{Shanghai}
  \country{China}}
\email{lixy22022@shanghaitech.edu.cn}

\author{Miao Li}
\affiliation{
  \institution{Key Laboratory of Intelligent
Perception and Human-Machine
Collaboration, ShanghaiTech
University}
  \city{Shanghai}
  \country{China}}
\email{limiao@shanghaitech.edu.cn}

\author{Yuhan Cao}
\affiliation{
  \institution{Key Laboratory of Intelligent
Perception and Human-Machine
Collaboration, ShanghaiTech
University}
  \city{Shanghai}
  \country{China}}
\email{caoyh1@shanghaitech.edu.cn}

\author{Dengji Zhao}
\affiliation{
  \institution{Key Laboratory of Intelligent
Perception and Human-Machine
Collaboration, ShanghaiTech
University}
  \city{Shanghai}
  \country{China}}
\email{zhaodj@shanghaitech.edu.cn}


\begin{abstract}
Diffusion auction design for combinatorial settings is a long-standing challenge. One difficulty is that we cannot directly extend the solutions for simpler settings to combinatorial settings (like extending the Vickrey auction to VCG in the traditional settings). In this paper, we propose a different approach to leverage the diffusion auctions for single-item settings. We design a combinatorial diffusion auction framework which can use any desirable single-item diffusion auction to produce a combinatorial auction to satisfy incentive compatibility (IC), individual rationality (IR), and weak budget balance (WBB). 

\end{abstract}

\keywords{Diffusion Auction, Combinatorial Auction, Mechanism Design}


         
\newcommand{\BibTeX}{\rm B\kern-.05em{\sc i\kern-.025em b}\kern-.08em\TeX}


\begin{document}


\pagestyle{fancy}
\fancyhead{}


\maketitle 

%
%
%
\section{Introduction}




Auction theory plays a significant role in mechanism design, which integrates economics and artificial intelligence \cite{nisan2009algorithmic}. Combinatorial auctions represent a complex auction setting. They allow bidders to place bids on combinations of items rather than on individual items. These auctions are particularly valuable when items have complementary or substitutive values, meaning the value of a combination may differ from the sum of the values of individual item. Traditional applications of combinatorial auctions include spectrum sales, logistics, and procurement, where complex interdependencies between items are common.


Diffusion auction design is a new trend in mechanism design which considers the connections between buyers and asks buyers to report them~\cite{li2017mechanism,li2019diffusion}.
The main purpose of a diffusion auction is to involve more participants to the auction so that we could improve social welfare or the seller's revenue.
The literature has proposed many interesting mechanisms for selling a single-item including multi-unit case~\cite{GuoH21}.
However, none of them can be easily extended to the combinatorial setting.
The difficulty in a nutshell is that inviting more neighbors in a combinatorial setting can sometimes be mimicked by reporting a combined valuation report of a buyer and his neighbors. This is not possible in simpler settings as the buyers' valuation reports are limited, e.g., in a single-item setting, a buyer can only report one value.



In this paper, we do not try to directly extend the principle of a single-item diffusion auction to a combinatorial setting, which was proved seems impossible~\cite{zhao2019selling,kawasaki2019strategyproof}.
Instead, we proposed a combinatorial framework to transfer a combinatorial problem into multiple single-item/bundle sub-problems in multiple stages and apply single-item diffusion auctions to produce the final outcome.

Given any desirable single-item diffusion auction, the framework will produce a combinatorial auction to satisfy incentive compatibility (IC), individual rationality (IR), and weak budget balance (WBB). 
Additionally, we study a specific mechanism generated by framework and show the effectiveness of the framework. Finally, we show that the generated mechanisms can provide positive incentives to non-winners, an important but often overlooked issue in diffusion auctions.

In summary, our work advances the state of the art as the following:
\begin{itemize} 
\item We introduce the diffusion combinatorial auction framework (DCAF) on social networks, in which we can use classic single-item diffusion mechanisms with IC and IR to design combinatorial diffusion mechanisms with IC, IR, and WBB. 

\item We propose the property of existing positive incentive for non-winner (EPI4NW) which says that for the bidders whose allocation is empty, they can still get positive utility. We prove the mechanisms produced by DCAF satisfy EPI4NW. 

\item We then give a specific instance based on our framework named dealer retail mechanism (DRM). It is the first combinatorial diffusion mechanism satisfying EPI4NW, IC, IR, and WBB.
\end{itemize}
\section{Related Work}
\subsection{Combinatorial Auction}
Combinatorial auctions are a fundamental aspect of mechanism design, enabling the simultaneous auctioning of multiple items while allowing bidders to submit bids on combinations of items. The Vickrey-Clarke-Groves (VCG) mechanism~\cite{vickrey1961counterspeculation,clarke1971multipart,groves1973incentives} is widely regarded as a benchmark for combinatorial auctions, ensuring incentive compatibility (IC), individual rationality (IR), and the maximization of social welfare. However, the VCG mechanism is hindered by its computational complexity, notably the NP-hardness of determining the winners~\cite{assadi2020improved}.

To address the computational challenge, several efficient mechanisms have been proposed. Notably, \citet{dobzinski2006truthful} introduced the first IC and computationally efficient mechanism for general monotone valuations. Further developments include mechanisms tailored for sub-additive valuations \cite{dobzinski2007two} and XOS valuations \cite{assadi2020improved}, each offered improved approximation ratios on social welfare.

\subsection{Diffusion Auction}
In 1994, a pioneering work of \citet{bulow1994auctions} proved that one extra bidder in the simplest second price auction outperform the optimal auction without the extra bidder in terms of  seller's revenue under the assumption of symmetric type. This finding shifted academic focus towards attracting more participants to auctions, rather than solely pursing the design of optimal auction.

\citet{li2017mechanism} was the first to study auctions under the setting of social network, where buyers are incentivized to invite their neighbors to participate in the auction. This work has inspired numerous extensions~\cite{zhao2019selling,kawasaki2019strategyproof,fang2023multiunit}, and similar studies have emerged in other domains, such as matching~\cite{you2022strategy,Yang2022OneSidedMW} and cooperative games~\cite{zhang2022incentives}.

Multi-unit auctions in social networks address scenarios where a seller offers multiple homogeneous items, and buyers can strategically report their valuations and social connections. These auctions aim to maximize seller revenue by attracting more buyers. Previous mechanisms, such as GIDM~\cite{zhao2019selling} and DNA-MU~\cite{kawasaki2019strategyproof}, failed to ensure IC due to potential buyer manipulations. SNCA~\cite{Xiao_Song_Khoussainov_2022} introduced budget constraints but deviated from standard multi-unit diffusion auctions.

Recent efforts have introduced two mechanisms that satisfy truthfulness in multi-unit diffusion auctions.
The LDM-Tree mechanism~\cite{DBLP:conf/atal/LiuLZ23} localizes competition within layers of a tree based on the distance of agents from the seller, while MUDAN~\cite{fang2023multiunit} uses an iterative approach to allocate items during graph exploration. Our mechanism offers a novel approach to incentivizing buyers by providing profit, akin to IDM, but addressing the greater complexity inherent in combinatorial auction scenarios.

There are currently few works on combinatorial auctions on social networks. MetaMSN-m \cite{fang2024meta} provides a combinatorial diffusion auction framework, but it cannot generate mechanisms satisfy EPI4NW.

\section{The Model}

We consider the problem of an auction in the social network. A seller $s$ wants to sell $m$ items $M = \{1,\cdots,m\}$ in a digraph $G=(N,E)$. There are $n$ potential bidders denoted by $N=\{1,\cdots,n\}$ in the social network. Edge $(i,j)\in E$ indicates that bidders $i$ and $j$ are neighbors of each other. Let $B=2^M$ denote the total set of all possible item bundles. Each bidder $i$ is interested in all different bundles of items. Each bundle can be represented as $b \in B$. Each bidder $i\in N$ has different valuations to every bundle, which is defined as a valuation function $v_{i}(b)\in \mathbb{R}^{+}$. $v_{i}(b)$ means the cost that bidder $i$ is willing to pay for the bundle $b\in B$. According to the normal situation, we assume the valuation function $v_{i}$ is monotone, i.e., $v_{i}(x)\leq v_{i}(y)$ for all bundle $x,y\in B, x\subseteq y$. 

 We consider that the seller and all bidders are from a social network. Each bidder $i \in N$ has a set of neighbor bidders denoted by $r_{i}\subseteq N\cup\{s\}$ who are directly connected to her in the social network and she does not know anyone else except her neighbors. In particular, $r_s$ represents the neighbor set of the seller $s$. 

 At the beginning, we assume only the seller $s$ and her neighbors $r_s$ know the auction. In order to attract more bidders to join, the seller needs her neighbors to invite their neighbors to the auction.  All participants of the auction can decide whether to spreads the auction information to their neighbors. The number of participants increases until no new bidders are invited. 
 
Let $\theta_i=(v_i,r_i)$ be the type of bidder $i\in N$ and $\theta = (\theta_{1},\cdots,\theta_{n})$ as the type profile of all bidders. The type $\theta_i$ is private information of bidder $i$. She may misreport her true information for more benefits in some situations. Let $\theta^\prime_i = (v^\prime_i, r^\prime_i)$ be the reported profile of bidder $i$, where $v^\prime_i: B\rightarrow \mathbb{R^{+}}$ and $r^\prime_i \subseteq r_i$, meaning that the reported neighbor set is a subset of the true neighbor set. In this case, $\theta^\prime = (\theta^\prime_1, \cdots, \theta^\prime_n)$ is defined as the report profile of all bidders in $N$. For convenience, we use $\theta_{-i}$ to represent the reported profile of all bidders except for bidder $i$, and $\theta^\prime$ can also be written as $(\theta_i, \theta_{-i})$. Let $T$ be a bidder set, the reported profile of bidders $i\in T$ is represented as $\theta^\prime_T=\bigcup_{i\in T} \theta_i^\prime$. We use $\Theta$ to represent the space of $\theta$.

Given a reported profile $\theta^\prime$, the seller $s$ can use a mechanism $\mathcal{M}$ to determine how to allocate the items and how much each bidder needs to pay.

\begin{definition}[Auction Mechanism]
    An auction mechanism $\mathcal{M}$ consists of an allocation policy $\pi = (\pi_{i})_{i\in N}$ and a payment policy $p=(p_{i})_{i\in N}$, where $\pi_{i} \in B$ and $p_{i}\in \mathbb{R}$ are allocation function and payment function for bidder $i$ respectively.
\end{definition}

In the real scenario, only properly invited bidders can join in the auction. Hence, we use the type reported by bidders to identify whether the bidders are qualified. the bidder $i$ is qualified under reported type if and only if there is at least one path between seller $s$ and $i$. This means there is a bidder sequence $(j_1,j_2,\cdots,j_l,i)$, where $j_1\in r_s$ and for all $1<t\leq l$, $j_t\in r_{j_{t-1}}$, $i\in j_l$. Each bidder can only appear once in the sequence. 

Let $Q(\theta^\prime)$ be the set of all qualified agents under $\theta^\prime$. A special type of auctions, diffusion auction, works on only qualified bidders. 

\begin{definition}[Diffusion Auction Mechanism]
A diffusion auction mechanism is an auction mechanism such that for all reported type profiles $\theta^\prime$, it satisfies:
\begin{enumerate}
    \item For any unqualified agent $i\notin Q(\theta^\prime)$,  $\pi_i(\theta^\prime) = \emptyset$ and $p_i(\theta^\prime) = 0$. 
    \item For any qualified agent $i\in Q(\theta^\prime)$, $\pi_{i}(\theta^\prime)$ and $p_{i}(\theta^\prime)$ are independent of the types of all unqualified agents. 
\end{enumerate}
\end{definition} 

Given a bidder $i$ of type $\theta_{i}=(v_{i},r_{i})$ and a report profile $\theta^\prime$, the utility of $i$ under mechanism $\mathcal{M}=(\pi,p)$ is defined as $u_{i}(\theta_{i},\theta^\prime,(\pi,p))=v_{i}(\pi_{i}(\theta^\prime))-p_{i}(\theta^\prime)$. We use $u_{i}(\theta^\prime)$ to represent $u_{i}(\theta_{i},\theta^\prime,(\pi,p))$ for simplicity.

An ideal mechanism $\mathcal{M}$ should satisfy individual rationality, incentive compatibility, and weak budget balance.

We say a diffusion auction mechanism satisfies individually rationality if for each bidder, her utility is non-negative when she truthfully reports her valuation, no matter how many neighbors she invites and what the other bidders do. It means that invitation action will not make bidders' utility negative as long as they report valuation truthfully. 

\begin{definition}[Individual Rationality]
  A diffusion auction mechanism $\mathcal{M}=(\pi,p)$ satisfies individual rationality (IR) if for all $i\in N$, all $r_{i}^\prime\subseteq r_{i}$, and all $\theta_{-i}^\prime$, $u_{i}(\theta_{i}^\prime=(v_{i},r_{i}^\prime),\theta_{-i}^\prime,(\pi,p))\geq 0$.
\end{definition}

Apart from IR, another important property satisfies incentive compatibility. IC means that for each bidder $i$, reporting her valuation and fully broadcasting the auction information to her neighbors is always a dominant strategy, no matter what the others do. It guarantees that each bidder will report her valuation for each bundle truthfully and invite all her neighbours to join the auction.

\begin{definition}[Incentive Compatibility]
  A diffusion auction mechanism $\mathcal{M}=(\pi,p)$ satisfies incentive compatibility (IC) if for all $i\in N$, all $\theta^\prime$, and $\theta^\prime_{-i}$, $u_{i}(\theta_{i},\theta_{-i}^\prime,(\pi,p))\geq u_{i}(\theta_{i}^\prime,\theta_{-i}^\prime,(\pi,p))$.
\end{definition}

The following property is related to the seller's revenue from the auction. The seller's revenue equals to the sum of all bidders' payments. 
\begin{definition}[Seller's Revenue]
    Given a report profile $\theta^\prime$ and a diffusion auction mechanism $\mathcal{M}=(\pi,p)$, the revenue or profit of the seller is defined by the sum of all bidders’ payments, $\mathcal{R}(\theta^\prime)=\sum_{i\in {N}}{p_i(\theta^\prime)}$.
\end{definition}

We say that a mechanism $\mathcal{M}$ satisfies weakly budget balance if the seller's revenue is always non-negative from the auction. WBB ensures that seller will not lose in the auction and WBB is also an important motivation for sellers.   

\begin{definition}[Weak Budget Balance]
  A diffusion auction mechanism $\mathcal{M}=(\pi,p)$ satisfies weak budget balance (WBB) if for all $\theta^\prime\in \Theta$, $ \mathcal{R}(\theta^\prime)=\sum_{i\in {N}}{p_i(\theta^\prime)}\geq 0$.
\end{definition}

Our goal is to design diffusion combinatorial auction mechanisms which are IC, IR, and WBB.

\section{Diffusion Combinatorial Auction Framework}

In this section, we introduce a framework that can generate diffusion combinatorial auction mechanisms, based on existing single-item diffusion auction mechanisms. 

To the best of our knowledge, there is no such work considering the general combinatorial auction in a diffusion setting. MUDAN \cite{fang2023multiunit} is a mechanism that only considers multi-unit auctions in a social network, where inviting actions can increase the bidders’ chances of obtaining the items. However, the motivation is not strong enough comparing with directly getting profit (negative payments) when the bidder wins no item. Therefore, we are committed to solving this difficult problem. Another multi-unit mechanism, named LDM-Tree \cite{DBLP:conf/atal/LiuLZ23}, can allocate profit to those who win no items. However, it employs competition localization, which means only a few (several layers) bidders can earn the profit. Therefore, enabling more bidders to gain profit is a key consideration in mechanism design.

The core thought of our framework is based on single-item diffusion auction mechanisms. Two questions arise before proceeding to a single-item diffusion auction: which distributors are allowed to conduct local auctions in what range, and what items to sell for each distributor separately. Thus, we divide the whole framework into three parts. Each part can be described as a class of processes. The first part chooses candidate distributors for local auctions. The second part However, only when each of the processes satisfies some specific properties, the combination of them with our framework can produce a complete mechanism satisfying IC, IR, and WBB. \ref{fig:enter-label} shows the complete running process of the mechanisms designed by the framework. 

\begin{figure}
    \centering
    \includegraphics[width=1\linewidth]{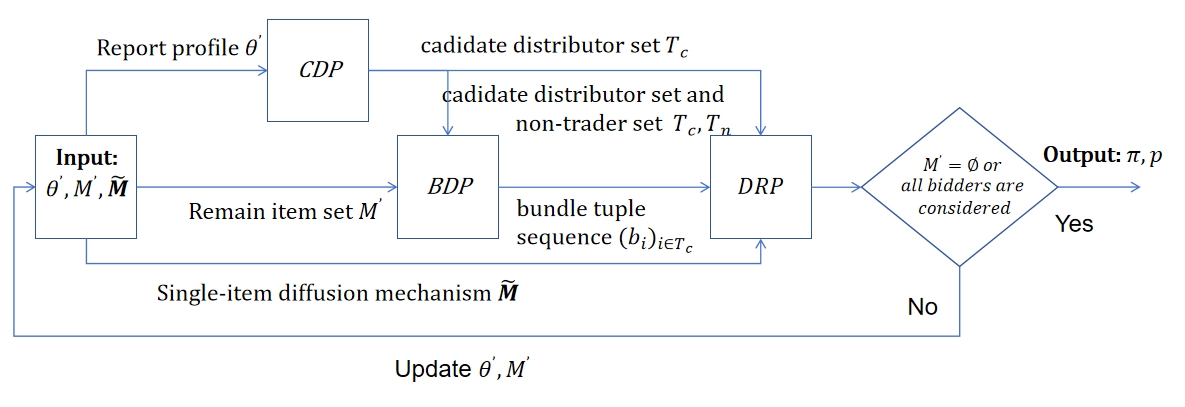}
    \caption{The Pipeline of Diffusion Combinatorial Auction Framework}
    \label{fig:enter-label}
\end{figure}

\subsection{Candidate Distributor Determination Process}

In this section we consider the first question: how to choose distributors. We define the process of choosing the candidate distributors as a function, which only takes the neighbors' reports as parameters and is independent to the valuations. 

\begin{definition}
    [Candidate Distributor Determination Process]
         We define $CDP: r^\prime \mapsto (T_c,T_n)$ as a process of screening out a set of candidate distributors $T_c$ and a set of non-trading bidders $T_n$, where $r^\prime=(r_i^\prime)_{i \in N \cup \{s\}}$ denotes all neighbor information reported. 
\end{definition}

A trivial approach of CDP is that $T_c=\{s\}$ and $T_n=\emptyset$. Then the only candidate distributor is the seller $s$. Additionally,
 we give another example of candidate distributor determination process. The key idea is inspired by the works \cite{fang2023multiunit,Li2024DoubleAO}. They use the graph exploration to utilize network structure. 
\begin{framed}
{\noindent\textbf{Graph Exploration CDP:}} \\
\noindent\rule{\textwidth}{0.5pt}
\textbf{Input:} $\theta^\prime$
\begin{enumerate}
 \item Index seller $s$'s neighbors as $(1,2,\cdots,|r_s^\prime|)$ where $|r_i^\prime|\geq |r_j^\prime|$ for any bidder pair $i,j\in r_s$ and $i<j$. Partition former $\lceil{\frac{1}{2}|r_s|}\rceil$ bidders into $T_c$ and others into $T_n$.
 \item Loop: 
   \begin{enumerate}
        \item Calculate the neighbors of non-trading bidders $$R(T_n)=\bigcup_{i\in T_n} r_i^\prime.$$ Index the neighbors as $(1,2,\cdots,|R(T_n)|)$, where $|r_i^\prime|\geq |r_j^\prime|$ for $i<j$.
        \item Partition former $\lceil{\frac{1}{2}|R(T_n)|}\rceil$ bidders into $T_c$ and other bidders into $T_n$.
        \item If $R(T_n) \subseteq (T_c \cup T_n)$, terminate the loop.
    \end{enumerate}
\end{enumerate}
\textbf{Output:} 
Candidate distributor set $T_c$ and non-trading set $T_n$.
\end{framed}
We can use the non-trading bidders' reported profile $\theta^\prime_{T_n}$ to decide the price and the fixed resale revenue of each bundle $b$. 

\begin{definition}[Pricing Function]   
         For each bundle $b\in B$, the price of $b$ is determined by a pricing function $Pr(\theta^\prime_{T_n},b)$.
\end{definition}

\begin{definition}[Resale Revenue Function]   
    For each bundle $b\in B$, the revenue of $b$ in the resale process is determined by the resale revenue function $Rev(\theta^\prime_{T_n},b) > Pr(\theta^\prime_{T_n},b)$.  
\end{definition}
Resale revenue function calculates the fixed resale revenue for each bundle. For all bidder $i\in T_c$, the resale revenue of $i$ is fixed for same bundle $b$. For example, pricing function outputs the second price of bidders $T_n$ and resale revenue function outputs the first price of bidders $T_n$. 
\subsection{Bundle Division Process}

Next, we consider the second problem before holding a local single-item diffusion auction. Choose reasonable bundles of items for each candidate distributor $i\in T_c$. We give each candidate distributors $i\in T_c$ a tuple of bundles $b_i=(b_i^*, b_i^\circ)$. $b_i^*$ is the potential resale bundle for candidate distributor $i$ and $b_i^\circ$ is the reserve bundle for candidate distributor $i$. Intuitively, $BDP$ gives the candidate distributors two chances to get positive utility. The candidate distributor $i$ firstly tries to resell bundle $b_i^*$. When $i$ cannot resell the bundle $b_i^*$, $i$ 
 will consider to reserve bundle $b_i^\circ$. With reserve bundle $b_i^\circ$, $i$ can try to sell a bundle $b_i^*$ which is more valuable to other bidders but less valuable to herself, thereby avoiding the risk of missing a valuable bundle.

We give the definition of bundle choosing process as: 
\begin{definition}
    [Bundle Division Process]
         We define $BDP: (\theta^\prime,\hat{M},T_c,T_n,Pr,Rev)\mapsto(b_1,b_2,...,b_{|T_c|})$ as a process of division of remain item bundle $\hat{M}$ for each candidate distributor $i\in T_c$, which satisfies
         \begin{itemize}
             \item $\bigcup_{i\in T_c} \{b_i^*\cup b_i^\circ\}\subseteq \hat{M}$.
             \item For every two candidate distributors $i,j\in T_c$, their bundle $(b_i^*\cup b_i^\circ)\cap (b_j^*\cup b_j^\circ)=\emptyset$.
         \end{itemize}
        
\end{definition}

The first property ensures the items in each bundle are valid to be resold and the second property guarantees each item can only be sold once.




Then we give an example of bundle division process.   
A trivial approach is to randomly choose one item for each candidate distributor in $T_c$. 
We will show another natural approach inspired by greedy idea in 5.2.

\begin{framed}
{\noindent\textbf{Random Single Item BDP:}} \\
 \noindent\rule{\textwidth}{0.5pt} 
 \label{BDP:random}
\textbf{Input:} $\theta^\prime$, $\hat{M}$, $T_c$, $T_n$, $Pr$, $Rev$.

\begin{enumerate}
    \item Randomly order the bidders in $T_c$ as $(1,2,\cdots,|T_c|)$.  
 \item Sequentially for $1,2,\cdots,|T_c|$: random choose an item $m$ from $\hat{M}$. Let $b_i=(b_i^*=\{m\},b_i^\circ=\{m\})$. Update $\hat{M}=\hat{M}\setminus \{m\}$.
\end{enumerate}
\textbf{Output:} 
The bundle tuple sequence $(b_1,b_2,\cdots,b_{|T_c|})$ divided for each bidder $i\in T_c$.
\end{framed}

\subsection{Diffusion Resale Process}

The last and the most significant section of our diffusion combinatorial auction framework is the diffusion resell process. In this process, we consider the candidate distributors as local sellers with specific bundles given by BDP and let them hold a single item diffusion auction among their critical children and try to sell potential resale bundles.

Firstly, we introduce some definitions based on the graph structure. Considering the social network as an unweighted graph, for any determined node $i\in V$, there would be multiple simple paths from seller $s$ to $i$. Additionally, there are several cut nodes shared by all these paths. These nodes are termed critical diffusion nodes for bidder $i$, as the absence of any of these nodes would prevent bidder $i$ from receiving the sale information and participating in the auction.

\begin{definition}
[Critical Diffusion Nodes] Given a reported type profile $\theta^\prime$ and a node $i \in V \setminus \{s\}$, we define $C_{i}\left(\theta^\prime\right)=\bigcap_{L \in \mathbb{L}_{i}(\theta^\prime)} L$ as the set of $i$'s critical diffusion nodes where $\mathbb{L}_{i}\left(\theta^\prime\right)$ is the set of all feasible paths from seller $s$ to node $i$ with $\theta^\prime$. Note that $i \in C_i(\theta^\prime)$. In short, critical diffusion nodes are the cut nodes shared by all feasible paths from $s$ to $i$. 
\end{definition}

Based on the definition of critical diffusion nodes, we define the critical diffusion sequence of bidder $i$ as an unique fully order set.

\begin{definition}
    [Critical Diffusion Sequence] The critical diffusion sequence of bidder $i$ is $C_{i}^{*}\left(\theta^\prime\right)=\left\{ds_{1}, \cdots, ds_{k}, ds_{k+1}, \cdots, i\right\}$ for all nodes in $C_i(\theta^\prime)$, where $ds_k \in C_{ds_{k+1}}(\theta^\prime)$. In this sequence, the former term is the critical diffusion nodes of latter term. 
\end{definition}

Then we give another relationship named critical children. Bidder $i$'s critical children are the bidders cut by bidder $i$ in the social network. 

\begin{definition}
    [Critical Children]
         We define  $D_{i}\left(\theta^\prime\right)$ as the set of $i$'s critical children where for all bidder $j \in D_{i}\left(\theta^\prime\right)$, $i$ is $j$'s critical diffusion node: $i \in C_j(\theta^\prime)$. Specially, $i \in D_i(\theta^\prime)$.
\end{definition}

The most important process of $DCAF$ is the diffusion resale process (DRP). We can consider $DRP$ as a single-item auction held by candidate distributor $i$ in her critical children.  

\begin{definition}
    [Diffusion Resale Process]
         We define $DRP: (\theta^\prime,T_c,(b_i)_{i\in T_c},Pr,Rev,\mathcal{\tilde{M}})\mapsto (\pi^*, p^*)$ as a diffusion resale process, where $\pi^*=\pi_i(\theta^\prime)$, $p^*=p_i(\theta^\prime)$, for all bidder $i\in \bigcup_{j\in T_c} D_j(\theta^\prime)$.The single-item diffusion auction mechanism in DRP process should satisfy revenue consistent property.
\end{definition}

We give out detailed allocation and payment of DRP in different situations below.

\begin{framed}
{\noindent\textbf{Diffusion Resale Process (DRP):}} \\
 \noindent\rule{\textwidth}{0.5pt}
 \textbf{Input:} $\theta^\prime$, $T_c$, $(b_i)_{i\in T_c}$, $Pr$, $Rev$, $\mathcal{\tilde{M}}$
 
 For each candidate distributor $i\in T_c$:
\begin{enumerate} 
    \item  Consider the whole bundle $b_i$ as a single item and apply $\mathcal{\tilde{M}}$ in bidders $D_i(\theta^\prime)$ to get $\tilde{\pi_j}$ and $\tilde{p_j}$ for all bidder $j\in D_i(\theta^\prime)$.
    \item If $\mathcal{\tilde{R}}(\theta^\prime)\geq Rev(\theta^\prime, b_i^*)$, then 
   \begin{equation*}
    \pi_j(\theta^\prime)=
        \begin{cases}
        \tilde{\pi}_j(\theta^\prime), & j\in D_i(\theta^\prime)\setminus\{i\}\\
        \emptyset, & j=i \\
        \end{cases}
    \end{equation*} 
   \begin{equation*}
    p_j(\theta^\prime)=
        \begin{cases}
        \tilde{p}_j(\theta^\prime), \quad \quad \quad \quad j\in D_i(\theta^\prime)\setminus\{i\}\\
        Pr(\theta', b_i^*)-Rev(\theta', b_i^*), \quad j=i \\
        \end{cases}
    \end{equation*} 
    \item Otherwise, 
    \begin{equation*}
    \pi_j(\theta^\prime)=
        \begin{cases}
        \emptyset, & j\in D_i(\theta^\prime)\setminus\{i\}\\
        b_i^\circ, & j=i \\
        \end{cases}
    \end{equation*} 
   \begin{equation*}
    p_j(\theta^\prime)=
        \begin{cases}
        0, & j\in D_i(\theta^\prime)\setminus\{i\}\\
        Pr(\theta', b_i^\circ), & j=i \\
        \end{cases}
    \end{equation*} 
\end{enumerate}
  \textbf{Output:} $\pi_j^*$ and $p_j^*$ for all bidder $j\in D_i(\theta^\prime)$. 
\end{framed}

\subsection{Diffusion Combinatorial Auction Framework}
Finally, we propose the whole diffusion combinatorial auction framework (DCAF). This framework involves all previous processes and it provides the allocation and payment results. The formal description of $DCAF$ is shown as follows. 
\begin{framed}
{\noindent\textbf{Diffusion Combinatorial Auction Framework (DCAF):}} \\
 \noindent\rule{\textwidth}{0.5pt}
\textbf{Input:} $\theta^\prime$, $\mathcal{\tilde{M}}$, $Pr$, $Rev$

\begin{enumerate}
\item Initialize the remain item set $\hat{M}=M$, the unconsidered participant set $\hat{N}=N\cup \{s\}$ 

    \item Use profile $\theta^\prime$ as the input of a $CDP$, get $CDP(r^\prime)=(T_c,T_n)$, where $T_c\cup T_n\subseteq \hat{N}$. 
    \item Use a $BDP(\theta^\prime,\hat{M},T_c,T_n,Pr,Rev)$ to find resale bundles $b_i$ for each candidate distributor $i\in T_c$. 
    \item For each candidate distributor $i\in T_c$, calculate $(\pi_j,p_j)=DRP(\theta^\prime,T_c,(b_i)_{i\in T_c},Pr,Rev,\mathcal{\tilde{M}})$ for each bidder $j\in D_i(\theta^\prime)$.
    \item For each bidder $i\in T_n$, her allocation is $\pi_i=\emptyset$ and payment is $p_i=0$.
     \item Update $\hat{N}=\hat{N}\setminus \{\bigcup_{i\in T_c} D_i(\theta^\prime)\cup T_n\}$ and  $\hat{M}=\hat{M}\setminus\{\bigcup_{j\notin \hat{N}} \pi_j(\theta)\}$.
    \item Then we get a new sub-problem with remain item set $\hat{M}$ unconsidered participant set $\hat{N}$ and $(\theta^\prime_i)_{i \in \hat{N}}$. \item Repeat step 2 to step 7 until $\hat{M}=\emptyset$ or $\hat{N}=s$.
    \item For each bidder $i\in \hat{N}\setminus s$, her allocation is $\pi_i=\emptyset$ and payment is $p_i=0$.
\end{enumerate}
\textbf{Output:} 
Allocation $\pi$ and payment $p$.
\end{framed}



 

To ensure we can eventually generate an IC mechanism, there are three sufficient conditions:
\begin{itemize}
    \item The $CDP$ satisfies \emph{candidate distributor consistency}.
    \item The $BDP$ satisfies \emph{resale diffusion monotonicity}.
    \item The $DRP$ satisfies \emph{revenue consistency}.
\end{itemize}

\begin{definition}[Candidate Distributor Consistency]
Assume there are two profiles $r^{\prime_1}$, $r^{\prime_2}$. Let $CDP(r^{\prime_1})=(T_c^1,T_n^1)$ and $CDP(r^{\prime_2})=(T_c^2,T_n^2)$. We say a $CDP$ satisfies candidate distributor consistency:
\begin{itemize}
\item  If $i\in T_c^1$, $r_i^{\prime_1}\subseteq r_i^{\prime_2}$ and $r_j^{\prime_1}=r_j^{\prime_2}$ for $j\neq i$, then $i\in T_c^2$.
\item  If $i\in T_n^1$, $r_i^{\prime_1}=r_i$ and $r_j^{\prime_1}=r_j^{\prime_2}$ for $j\neq i$, $i\in T_n^2$.
\item  If $i\in N\setminus (T_n^1\cup T_c^1)$, $r_i^{\prime_1}\subseteq r_i^{\prime_2}$ and $r_j^{\prime_1}=r_j^{\prime_2}$ for $j\neq i$, then $CDP(r^{\prime_1})=CDP(r^{\prime_2})$ .
\end{itemize}
\end{definition}

The first property means when other bidders' reported neighbors not change, the candidate distributor who reports more neighbors will not change her candidacy. The second property shows that bidders once belongs non-trading set when bidder reports all neighbors, they can not leave the non-trading set. Last property means the bidders who are neither in candidate distributor set
nor in non-trading set can not change the solution of $CDP$ when other bidders' reported neighbors are fixed.

\begin{definition}[Resale Diffusion Monotonicity]
    Assume there are two profiles $\theta^{\prime_1}$, $\theta^{\prime_2}$. Let $BDP(\theta^{\prime_1},\hat{M},T_c,T_n,Pr,Rev)=$ \\ $(b_1^1,b_2^1,\cdots,b_{|T_c|}^1)$ and $BDP(\theta^{\prime_2},\hat{M},T_c,T_n,Pr,Rev)=(b_1^2,b_2^2,\cdots,b_{|T_c|}^2)$. For all candidate distributor $i\in T_c$, if $r_i^{\prime1}\subseteq r_i^{\prime2}$, $u_i(\theta^{\prime_1})\le u_i(\theta^{\prime_2})$.
\end{definition}

Resale diffusion monotonicity means that for any candidate distributor $i$, her utility will increase when her reported neighbor scale extends.

\begin{definition}[Revenue Consistency]
    Given a report profile $\theta^\prime$, we define the revenue of seller $\tilde{s}$ in a single-item diffusion auction mechanism $\mathcal{\tilde{M}}=(\tilde{\pi},\tilde{p})$ as $\tilde{\mathcal{R}}(\theta^\prime)=\sum_{i\in D_{\tilde s}(\theta^\prime)}{p_i(\theta^\prime)}$. If for each bidder $i\in D_{\tilde s}$ and all possible $\theta^\prime\in\Theta$, if $u_{i}(\theta_{i},\theta_{-i},(\tilde{\pi},\tilde{p}))>0$ and $\tilde{\mathcal{R}}(\theta)<Rev(\theta, b)$. When she misreports $\theta_i$ as $\theta_i^\prime$, and $u_{i}(\theta_{i}^\prime,\theta_{-i}^\prime,(\tilde{\pi},\tilde{p}))>0$, we can ensure $\tilde{\mathcal{R}}(\theta^\prime)<Rev(\theta^\prime, b)$, for all bundle $b\in B$ then we say $\mathcal{\tilde{M}}$ is revenue consistent.
\end{definition}

A mechanism $\mathcal{\tilde{M}}$ satisfies revenue consistency when all  seller's critical children $i\in D_s$ ,whose utility is positive, can not influence the relationship between seller's revenue of selling bundle $b$ and fixed revenue of bundle $b$ under $\theta^\prime$
 via misreporting.

\begin{theorem}
    Suppose that a single-item mechanism $\mathcal{\tilde{M}}$ satisfies IC, IR, and revenue consistency, a $CDP$ satisfies candidate distributor consistency, a $BDP$ satisfies resale diffusion monotonicity. Then the corresponding mechanisms generated by applying $DCAF$ are IC, IR, and WBB.
\end{theorem}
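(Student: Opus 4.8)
The plan is to verify IC, IR, and WBB separately, exploiting the decomposition of \emph{DCAF} into a sequence of rounds, each of which runs several \emph{independent} local \emph{DRP} auctions. I would argue by induction on the rounds: since step~6 removes the processed bidders $\bigcup_{i\in T_c}D_i(\theta^\prime)\cup T_n$ together with the sold items before recursing, each round is a self-contained problem on the residual graph, and the diffusion-auction property cleanly separates qualified from unqualified agents. This reduces each global property to a local statement about a single candidate distributor's \emph{DRP} plus the round bookkeeping, which I then reassemble by the induction.

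For \textbf{WBB} I would compute the net payment contributed by one distributor $i\in T_c$. In the successful branch ($\tilde{\mathcal{R}}\ge Rev(\theta^\prime,b_i^*)$) the children pay $\sum_{j\in D_i(\theta^\prime)\setminus\{i\}}\tilde p_j=\tilde{\mathcal{R}}$ and $i$ pays $Pr(\theta^\prime,b_i^*)-Rev(\theta^\prime,b_i^*)$, so the local total is $\tilde{\mathcal{R}}+Pr(\theta^\prime,b_i^*)-Rev(\theta^\prime,b_i^*)\ge Pr(\theta^\prime,b_i^*)\ge 0$; in the reserve branch the only nonzero payment is $Pr(\theta^\prime,b_i^\circ)\ge 0$. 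Non-traders and leftover bidders pay $0$. Summing these non-negative local contributions over all distributors and all rounds yields $\mathcal{R}(\theta^\prime)\ge 0$, using $Rev>Pr\ge 0$ from the definitions of the pricing and resale-revenue functions.

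For \textbf{IR} I would split on the role a truthful bidder ends up in. Non-traders, final leftover bidders, and children who lose the reserve branch all receive $(\emptyset,0)$, hence $u_i=v_i(\emptyset)\ge 0$. A child who wins the successful branch obtains $(\tilde\pi_j,\tilde p_j)$, so $u_j\ge 0$ is inherited from the IR of $\mathcal{\tilde{M}}$. A distributor in the successful branch has $\pi_i=\emptyset$ and the negative payment $Pr-Rev<0$, giving $u_i=v_i(\emptyset)+(Rev-Pr)>0$. The only delicate case is a distributor in the reserve branch, where $u_i=v_i(b_i^\circ)-Pr(\theta^\prime,b_i^\circ)$; here I would appeal to the property that the reserve bundle returned by $BDP$ and the pricing function $Pr$ are produced so that $Pr(\theta^\prime,b_i^\circ)\le v_i(b_i^\circ)$, i.e.\ the reserve price never exceeds the distributor's own valuation of the bundle she retains, keeping $u_i\ge 0$.

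The real work is \textbf{IC}, and I expect it to be the main obstacle. Fixing $\theta^\prime_{-i}$, I would show that reporting $v_i$ truthfully and revealing all neighbors is dominant, separating a deviation into its valuation component and its diffusion component and classifying $i$ by the role truthful reporting assigns her. For the diffusion component the three structural hypotheses carry the argument: candidate-distributor consistency (conditions~1--3) controls how the partition $(T_c,T_n)$ can move under neighbor hiding---a truthful non-trader cannot leave $T_n$, a distributor who reveals all neighbors stays in $T_c$, and an ordinary bidder revealing more neighbors leaves the $CDP$ outcome untouched---so the only role changes a bidder can engineer are weakly dominated once combined with resale diffusion monotonicity, which guarantees that, as a distributor, revealing all neighbors is optimal. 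For the valuation component, a distributor's own outcome is independent of her reported $v_i$ (her payment is pinned by $Pr,Rev$, the branch is decided by $\tilde{\mathcal{R}}$ on which she does not bid, and $BDP$ must not reward her own valuation misreport), while a child's valuation affects only the local auction, where IC of $\mathcal{\tilde{M}}$ makes truth-telling optimal \emph{provided} the success/reserve branch cannot be flipped; this last loophole is exactly what revenue consistency closes, since it forbids a child with positive utility and below-threshold revenue from raising $\tilde{\mathcal{R}}$ above $Rev$ while staying profitable. The crux is to make these local guarantees compose across the recursion: I would argue inductively that, under truthful reporting, the round and role of bidder $i$ are invariant to the already-fixed reports of others, and that every admissible deviation is dominated within the single round where $i$ is processed, so that the per-round arguments assemble into global IC.
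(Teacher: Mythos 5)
Your proposal is correct and follows essentially the same route as the paper's own proof: the same round-by-round decomposition of \emph{DCAF}, the same case analysis by role (non-trader, candidate distributor, critical child, leftover bidder), the identical local payment computation for WBB ($\tilde{\mathcal{R}}+Pr-Rev$ in the resale branch, $Pr(\theta^\prime,b_i^\circ)$ in the reserve branch), and the same division of IC into a diffusion component (handled by candidate distributor consistency plus resale diffusion monotonicity) and a valuation component (handled by IC of $\mathcal{\tilde{M}}$ with revenue consistency closing the branch-flipping loophole). If anything, you are slightly more careful than the paper, which silently assumes the reserve-branch condition $Pr(\theta^\prime,b_i^\circ)\le v_i(b_i^\circ)$ for IR and does not address the cross-round composition issue you flag at the end.
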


\begin{proof}
    First, we consider IR for each round of step 2 to step 6. 
    
    \textbf{Case 1: $i\in T_n$}. For bidder $i$, her allocation $\pi_i(\theta^\prime)=\emptyset$ and her payment $p_i=0$.
        
    \textbf{Case 2: $i\in T_c$}. For bidder $i$, when she resells bundle through $DRP$, her allocation $\pi_i(\theta^\prime)=\emptyset$ and her payment $p_i\geq 0$.
    When she reserves bundle through $DRP$, her utility  $u_i(\theta^\prime)=v_i^\prime(b_i)-p_i(\theta^\prime)\geq 0$.
    Else, her allocation $\pi_i(\theta^\prime)=\emptyset$ and her payment $p_i=0$.

    \textbf{Case 3: $i\notin T_n$}.
    Since $\mathcal{\tilde{M}}$ is IR, for all bidder $i\in \bigcup_{j\in T_c} D_j(\theta^\prime)$, we can get $u_i(\theta^\prime)\geq 0$.   
    For other bidders $i\notin T_n$, we can get $\pi_i(\theta^\prime)=\emptyset$ and $p_i(\theta^\prime)=0$.

  It is concluded that no matter which category bidder $i$ is classified by $CDP$ in each round, she gains a non-negative utility. Thus, she also gain a non-negative utility through whole $DCAF$ process. 
    \\
    \textbf{Then we consider WBB.}
    
    According to step 4 and step 5 of $DCAF$, we consider each bidder $i$ who not take part in $DRP$. Her allocation is definitely $\pi_i=\emptyset$ and her payment $p_i=0$.
    For each candidate distributor $i\in T_c$, we consider the payments of $\{p_j(\theta^\prime)\}_{j\in D_{i}(\theta^\prime)}$. According to step 2 of $DRP$, when $\mathcal{\tilde{R}}(\theta^\prime)\geq Rev(\theta^\prime, b_i^*)$, we can get $\sum_{j\in D_{i}(\theta^\prime)}{p_j(\theta^\prime)}=Pr(\theta^\prime, b_i^*)-Rev(\theta^\prime, b_i^*)+\mathcal{\tilde{R}}(\theta^\prime)>0$. According to step 2 of $DRP$, when $\mathcal{\tilde{R}}(\theta^\prime)<Rev(\theta^\prime, b_i^*)$, we can get $\sum_{j\in D_{i}(\theta^\prime)}{p_j(\theta^\prime)}=Pr(\theta^\prime, b_i^\circ)>0$.
    \\
    \textbf{For IC, first we show bidders will report their neighbors truthfully for each round of step 2 to step 7.}
    
    \textbf{Case 1: $i\in T_n$}.
    Candidate distributor consistency $CDP$ ensures that once $i\in T_n$, no matter she reports, bidder $i$ always stay in $T_n$. She will get nothing and pay 0. 
    In this case, bidder $i$ always get nothing and pay 0.
    
    \textbf{Case 2: $i\in T_c$}.
    According to the properties of candidate distributor consistency $CDP$, bidder $i$ can only change herself from $T_c$ to $T_n$. When $i\in T_n$, her utility $u_i(\theta)$ is always 0.
    For $BDP$ satisfies Resale diffusion monotonicity, no matter $r_i^\prime$ reported, bundle tuple $b_i$ will not change.
    Since $\tilde{\mathcal{M}}$ is IC and revenue consistent, bidder $i$ will report truthfully.
    
    \textbf{Case 3: $i\in D_j(\theta^\prime)$, $j\in T_c$}.
    The only part in $DCAF$ can increase $u_i(\theta)$ is $DRP$. Since $\tilde{\mathcal{M}}$ is IC and revenue consistent, bidder $i$ will report truthfully. 
    
    \textbf{Case 4: Otherwise}.
    In this case, bidder $i$ always get nothing and pay 0. 

    \textbf{Then we show bidders will report their valuations truthfully  for each round of step 2 to step 7.}
    
    \textbf{Case 1: $i\in T_n$}.
    Candidate distributor consistency $CDP$ ensures that once $i\in T_n$, no matter she reports, bidder $i$ always stay in $T_n$. She will get nothing and pay 0. 
    In this case, bidder $i$ always get nothing and pay 0.
    
    \textbf{Case 2: $i\in T_c$}.
    According to the properties of candidate distributor consistency $CDP$, bidder $i$ can not change herself from $T_c$ to $T_n$ via misreporting $v_i^\prime$.
    For $BDP$ satisfies Resale diffusion monotonicity, no matter $r_i^\prime$ reported, bundle tuple $b_i$ will not change.
    For $DRP$, $p_i(\theta)$ is independent $v_i(\theta^\prime)$.

    \textbf{Case 3: $i\in D_j(\theta^\prime)$, $j\in T_c$}.
    The only part in $DCAF$ can increase $u_i(\theta)$ is $DRP$. Since $\tilde{\mathcal{M}}$ is IC and revenue consistent, bidder $i$ will report truthfully.
    
    \textbf{Case 4: Otherwise}.
    In this case, bidder $i$ always get nothing and pay 0.

    Thus, the corresponding mechanisms generated by applying $DCAF$ are IC, IR, and WBB.
\end{proof}
\section{Dealer Retail Mechanism}

In this section, we will illustrate a concrete mechanism based on the framework called dealer retail mechanism (DRM). We give the instances of $CDP$, $BDP$, $DRP$ to show the $DRM$. Firstly, we propose a novel bundle division process called Greedy BDP inspired by greedy approach. Then, we find that $DRP$ can use the classical single-item diffusion mechanism IDM, denoted as IDM $DRP$. Finally, we compare the properties of $DRP$ with previous works and show $DRP$ can achieve considerable theory performance.
\subsection{Greedy BDP}
For the bundle division process, we show a simple example using the random approach in \ref{BDP:random}. In addition to the random method, a deterministic method can be given, namely the greedy BDP, which selects the most valuable bundles for each bidder. We use the pricing function which outputs the second price of bidders $T_n$ and resale revenue function which outputs the first price of bidders $T_n$.

\begin{framed}

{\noindent\textbf{Greedy BDP:}} \\
 \noindent\rule{\textwidth}{0.5pt}
 \textbf{Input:} $\theta^\prime$, $\hat{M}$, $T_c$, $T_n$, $Pr$, $Rev$.

\begin{enumerate}
    \item Order the bidders in $T_c$ as $(1,2,\cdots,|T_c|)$.  The ordering rule is that the index of each bidder $i\in T_c$ is independent of report profile $\theta^{\prime}$.   
 \item Sequentially for $1,2,\cdots,|T_c|$: \\
 $b_i^*=\arg\max_{b\subseteq \hat{M}}(\max(v_i(b),Rev(\theta^\prime_{T_n},b))\\  
 -Pr(\theta^\prime_{T_n},b))$ \emph{($b_i^*$ may be $\emptyset$)}, \\ $b_i^\circ=\arg\max_{b\subseteq \hat{M}}(v_i(b)-Pr(\theta^\prime_{T_n},b))$.
 \item Update $\hat{M}=\hat{M}\setminus \{b_i^* \cup b_i^\circ \}$. 
\end{enumerate}
\textbf{Output:} 
The bundle tuple sequence $(b_1,b_2,...,b_{|T_c|})$ divided for each bidder $i\in T_c$.
\end{framed}
\begin{proposition}
    Greedy BDP satisfies resale diffusion monotonicity. 
\end{proposition}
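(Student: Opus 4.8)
The plan is to reduce resale diffusion monotonicity for the Greedy BDP to a single structural fact: the bundle tuple $b_i=(b_i^*,b_i^\circ)$ that the Greedy BDP assigns to a candidate distributor $i\in T_c$ does not depend on $i$'s own reported neighbor set $r_i^\prime$. Fix $i$ and two profiles $\theta^{\prime_1},\theta^{\prime_2}$ that agree in every coordinate except that $r_i^{\prime1}\subseteq r_i^{\prime2}$, and run the BDP with the same $\hat M$, $T_c$, $T_n$, $Pr$, $Rev$ in both cases. First I would observe that because $i\in T_c$ we have $i\notin T_n$, so the functions $Pr(\theta^\prime_{T_n},\cdot)$ and $Rev(\theta^\prime_{T_n},\cdot)$ are literally the same objects under $\theta^{\prime_1}$ and $\theta^{\prime_2}$; changing $r_i$ cannot touch them.

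Next I would prove the bundle invariance by induction along the processing order of $T_c$. The Greedy BDP fixes this order independently of the report profile, so $i$ sits in the same position under both profiles. For the bidders processed before $i$, each selection $b_j^*,b_j^\circ$ is an $\arg\max$ over subsets of the running $\hat M$ of an expression involving only $v_j$, $Pr$, and $Rev$ --- no term references $r_i$ --- so by induction the running leftover set $\hat M$ just before $i$'s turn is identical in the two runs. Consequently the two defining optimizations $b_i^*=\arg\max_{b\subseteq\hat M}(\max(v_i(b),Rev(b))-Pr(b))$ and $b_i^\circ=\arg\max_{b\subseteq\hat M}(v_i(b)-Pr(b))$ are solved over the same feasible set with the same objective, giving $b_i$ invariant to $r_i^\prime$.

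With $b_i$ fixed, the only remaining channel affecting $u_i$ is the subsequent DRP, which runs the single-item diffusion auction $\tilde M$ over $i$'s critical children. Since $r_i^{\prime1}\subseteq r_i^{\prime2}$ enlarges the invitee set ($D_i(\theta^{\prime1})\subseteq D_i(\theta^{\prime2})$), I would argue that the local revenue $\tilde{\mathcal R}$ is weakly larger under $\theta^{\prime2}$, so the resale condition $\tilde{\mathcal R}\ge Rev(b_i^*)$ is weakly easier to meet; reselling pays $i$ the fixed positive profit $Rev(b_i^*)-Pr(b_i^*)$, while reserving pays $v_i(b_i^\circ)-Pr(b_i^\circ)$. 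To finish I would use the greedy optimality of $b_i^*$: since $b_i^\circ$ is feasible, $\max(v_i(b_i^*),Rev(b_i^*))-Pr(b_i^*)\ge v_i(b_i^\circ)-Pr(b_i^\circ)$, which yields $Rev(b_i^*)-Pr(b_i^*)\ge v_i(b_i^\circ)-Pr(b_i^\circ)$ exactly when $Rev(b_i^*)\ge v_i(b_i^*)$, i.e. when the resale bundle is worth more to the crowd than to $i$ herself.

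I expect the main obstacle to be this last comparison. Establishing $u_i(\theta^{\prime_1})\le u_i(\theta^{\prime_2})$ through the DRP requires two facts that are easy to state but delicate to justify: that enlarging the set of critical children can only weakly raise $\tilde{\mathcal R}$ (a monotonicity property of the underlying single-item diffusion auction, tied to its IC and revenue-consistency guarantees), and that a resulting switch from reserving to reselling never lowers $i$'s payoff. The borderline case is when the greedy selects a $b_i^*$ with $v_i(b_i^*)>Rev(b_i^*)$; there I would need to show either that this case cannot trigger a resale that strictly undercuts the reserve payoff, or that the greedy tie-structure (which forces $v_i(b_i^*)-Pr(b_i^*)=v_i(b_i^\circ)-Pr(b_i^\circ)$ in that regime) keeps the two payoffs aligned. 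The bundle-invariance steps, by contrast, are routine bookkeeping once the order-independence and $T_n$-independence are noted.
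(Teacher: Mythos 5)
Your bundle-invariance argument is, in substance, the paper's \emph{entire} proof: the paper observes only that $i$'s rank in $T_c$ is independent of $r_i^\prime$ (step 1 of Greedy BDP), concludes that the bundle tuple $b_i$ is unchanged, and from that directly asserts that $i$'s utility is unchanged. You carry out this step more carefully than the paper does --- noting that $Pr$ and $Rev$ depend only on $\theta^\prime_{T_n}$ and $i\notin T_n$, and that the leftover set $\hat M$ at $i$'s turn is identical by induction over the report-independent processing order --- so up to that point you match, and in fact strengthen, the published argument.

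Where you got stuck is exactly where the paper's proof is silent: the jump from ``$b_i$ unchanged'' to ``$u_i$ unchanged'' is a non sequitur, because $u_i$ is realized through the DRP, which does depend on $r_i^\prime$ via $D_i(\theta^\prime)$ and the local revenue $\tilde{\mathcal{R}}$. Your borderline case is a genuine problem, not a technicality you failed to see through. If the greedy maximizer satisfies $v_i(b_i^*)>Rev(\theta^\prime_{T_n},b_i^*)$, then the two greedy objectives have the same optimal value, so the reserve payoff $v_i(b_i^\circ)-Pr(\theta^\prime_{T_n},b_i^\circ)$ equals $v_i(b_i^*)-Pr(\theta^\prime_{T_n},b_i^*)$, which strictly exceeds the resale payoff $Rev(\theta^\prime_{T_n},b_i^*)-Pr(\theta^\prime_{T_n},b_i^*)$; if reporting more neighbors pushes $\tilde{\mathcal{R}}$ past the threshold $Rev(\theta^\prime_{T_n},b_i^*)$, the DRP forces the switch from reserving to reselling and $i$'s utility strictly drops. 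A concrete instance: one item with $v_i=10$, $Rev=5$, $Pr=3$, where a newly invited critical child bids $8$ --- $i$'s utility falls from $7$ (reserve) to $2$ (forced resale). Nothing in Greedy BDP or DRP as defined rules this out, so the step you flagged cannot be closed as the mechanism stands; the paper never closes it either, because it never examines the DRP's effect on $u_i$ at all. In short, your proposal is incomplete, but the missing piece is a defect in the paper's own proof (and arguably in the proposition itself), not an oversight unique to your attempt.
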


\begin{proof}
    Assume candidate distributor $i\in T_c$ misreported $r_i^{\prime1}$ as $r_i^{\prime2}$ and $r_i^{\prime2}\subseteq r_i^{\prime1}$. According to Greedy BDP, her rank in $T_c$ is independent of $r_i^\prime$. Thus, misreported $r_i^{\prime1}$ as $r_i^{\prime2}$ will not influence utility of bidder $i$. Thus, greedy BDP satisfies resale diffusion monotonicity.
\end{proof}

\subsection{IDM DRP}
We will show that $DRP$ can use the classical single-item diffusion mechanism IDM \cite{li2017mechanism}. We give the formal description of IDM as follows:
\begin{framed}
{\noindent\textbf{Information Diffusion Mechanism (IDM):}} \\
  \noindent\rule{\textwidth}{0.5pt}
\textbf{Input:} $\theta^\prime$
\begin{enumerate}
 \item Denote $m$ as the buyer with the highest valuation report for all item bundle $b=M$, break tie arbitrarily.
 \item Compute the critical diffusion sequence of $m$. Let $C_m^*=\{1,2,...,m\}$ be $m$'s critical diffusion sequence.
 \item For each bidder $i\in C_m$, compute $v_i^*(b)=\max_{j\in D_1(\theta^\prime)\setminus D_i(\theta^\prime)}(v_j(b))$.
 \item The allocation function $\pi$ is:
\begin{equation*}
    \pi_i(\theta^\prime)=
    \begin{cases}
    b, &{i\in C_m\setminus \{m\} \text{ and } v_i^\prime(b)=v_{i+1}^*(b)}\\
    b, &{i=m} \\
    \emptyset, & \text{otherwise} 
    \end{cases}
\end{equation*}
If there exist multiple bidders $i$ with $\pi_i(\theta^\prime)=b$, allocate
the item to the buyer with minimum index $i$ in $C_m$.
\item Assume bidder $w$ win the items under $\pi_i$, the payment function $p$ is:
\begin{equation*}
    p_i(\theta^\prime)=
    \begin{cases}
    v_i^*(b)-v_{i+1}^*(b), &{i\in C_w\setminus \{w\}} \\
    v_i^*(b), &{i=w} \\
    0, & \text{otherwise} 
    \end{cases}
\end{equation*}
\end{enumerate}
\textbf{Output:} 
Allocation $\pi$ and payment $p$.
\end{framed}
 \begin{proposition}
     IDM DRP satisfies revenue consistency.
 \end{proposition}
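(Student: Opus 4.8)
The plan is to reduce revenue consistency to a single structural invariant of IDM: that the local seller's revenue is unchanged under any deviation of a positive-utility critical child. The first step is to observe that $Rev(\theta',b)=Rev(\theta,b)$. By definition the resale revenue function $Rev(\theta'_{T_n},b)$ reads only the reports of the non-trading set $T_n$, whereas the deviating bidder $i$ lies in $D_{\tilde{s}}(\theta')$, which is disjoint from $T_n$; hence $i$'s misreport leaves $Rev$ untouched. It therefore suffices to prove $\tilde{\mathcal{R}}(\theta')\le \tilde{\mathcal{R}}(\theta)$, since together with the hypothesis $\tilde{\mathcal{R}}(\theta)<Rev(\theta,b)$ this gives $\tilde{\mathcal{R}}(\theta')<Rev(\theta',b)$ for every bundle $b$, which is exactly revenue consistency.

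Next I would put the IDM revenue in closed form. Writing the winner as $w$ and her critical diffusion sequence as $C_w=\{1,\dots,w\}$ ordered shallow-to-deep (so $D_1\supseteq\cdots\supseteq D_w$ and $v_1^*\le\cdots\le v_w^*$), the payments telescope:
\[
\tilde{\mathcal{R}}(\theta')=\tilde{p}_w+\sum_{t=1}^{w-1}\tilde{p}_t=v_w^*+\sum_{t=1}^{w-1}(v_t^*-v_{t+1}^*)=v_1^*.
\]
Thus the revenue equals $v_1^*$, the highest valuation among bidders outside the subtree $D_{n_1}$ of the shallowest critical node $n_1$ of the winner, i.e.\ the highest bid reachable from $\tilde{s}$ while avoiding $n_1$. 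I would then record who can have strictly positive utility: only the winner (utility $v_w-v_w^*$) and the rewarded intermediate critical nodes (utility $v_{t+1}^*-v_t^*\ge 0$); every other bidder has utility $0$. Crucially, each such bidder lies inside $D_{n_1}$, so her own valuation never enters $v_1^*$.

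The heart of the argument is to show that the shallowest critical node is invariant under $i$'s deviation, i.e.\ $n_1'=n_1$. Two facts drive this. (i) A bidder's critical diffusion nodes $C_i$ depend only on the feasible paths from $\tilde{s}$ \emph{to} $i$, hence only on others' edges; neither shrinking $r_i'\subseteq r_i$ (which only prunes $i$'s descendants) nor altering $v_i'$ changes $C_i$, so the shallowest node of $C_i$ is the same under $\theta$ and $\theta'$. (ii) Since $u_i(\theta')>0$, bidder $i$ is either the new winner or a rewarded node on the path of the new winner $w'$, so $w'\in D_i(\theta')$; therefore $C_{w'}$ contains $C_i$ and its shallowest element coincides with that of $C_i$, which by (i) equals $n_1$. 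It follows that the set of bidders reachable while avoiding $n_1$ is identical under $\theta$ and $\theta'$ — every reachability change caused by $i$ occurs inside $D_i\subseteq D_{n_1}$ — and their valuations are untouched because they differ from $i$. Hence $\tilde{\mathcal{R}}(\theta')=v_1^*=\tilde{\mathcal{R}}(\theta)$, which gives the required inequality (indeed an equality).

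I expect the main obstacle to be case (ii): controlling how a deviation can re-route the auction so that a \emph{different} bidder wins while $i$ still profits. The clean way through is the branch-confinement statement above — positive utility forces both $i$ and the resulting winner $w'$ into the subtree hanging off the unchanged node $n_1$ — combined with the fact that hiding neighbors can only disconnect descendants of $i$, never the external competitors that determine $v_1^*$. I would additionally verify the degenerate situations: when the deviation leaves no qualified external competitor (so $v_1^*=0$), and when ties in the IDM allocation shift the winner among equal-valued bidders within the same branch, confirming that in each case $n_1$, and hence $v_1^*$, is preserved.
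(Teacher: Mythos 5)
Your proof is correct, and it is substantially more complete than the paper's own argument, which is essentially a two-sentence assertion. The paper claims that ``the revenue of the candidate distributor $\tilde{\mathcal{R}}(\theta^\prime)=Pr(\theta^\prime,b_i^*)-Rev(\theta^\prime,b_i^*)$ is independent of the critical children's reports''; this mis-identifies $\tilde{\mathcal{R}}$ (which by definition is the sum of the local IDM payments, not the distributor's own DRP payment), and the independence claim is false for \emph{arbitrary} critical children: a zero-utility bidder outside $D_{n_1}$ who holds the highest external bid is a critical child of $\tilde{s}$ and her report literally determines $\tilde{\mathcal{R}}=v_1^*$. What makes the conclusion true is exactly the restriction to deviators with positive utility before and after the deviation, and that is the content your proof supplies: (a) the telescoping identity $\tilde{\mathcal{R}}=v_1^*$; (b) the observation that positive-utility bidders must lie on the winner's critical sequence, hence inside $D_{n_1}$, so their valuations never enter $v_1^*$; and (c) the key lemma --- absent from the paper --- that $C_i$, and hence the shallowest critical node $n_1$, is invariant under $i$'s own deviation, so the set of qualified competitors outside $D_{n_1}$, and therefore $v_1^*$ and the comparison with $Rev$, is unchanged. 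Two minor caveats: your first step, $Rev(\theta^\prime,b)=Rev(\theta,b)$, rests on the framework's implicit assumption that $D_{\tilde{s}}$ and $T_n$ are disjoint (DCAF's bookkeeping presupposes this but the paper never proves it); and the paper's IDM description writes $v_i^*(b)=\max_{j\in D_1\setminus D_i}v_j(b)$, under which $v_1^*$ would be a maximum over the empty set --- your reading (maximum over all local bidders outside $D_i$), taken from the original IDM, is clearly the intended one, and your argument is the rigorous proof the paper should have given.
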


\begin{proof}
    Assume the candidate distributor as $i\in T_c$. In IDM DRP, the revenue of candidate distributor $\mathcal{\tilde{R}}(\theta^\prime)=Pr(\theta^\prime, b_i^*)-Rev(\theta^\prime, b_i^*)$ is independent of $i$'s critical children $j\in D_{\tilde{s}}$'s reports. Thus, no matter $j\in D_{\tilde{s}}$ misreports, the relationship between $\mathcal{\tilde{R}}(\theta^\prime)$ and $Rev(\theta^\prime,b)$ will not change. Thus, IDM DRP satisfies revenue consistency. 
\end{proof}

\subsection{Dealer Retail Mechanism}

Based on $DCAF$, we propose a new mechanism called Dealer Retail Mechanism (DRM). $DRM$ uses graph exploration $CDP$ to choose the candidate distributors and uses greedy $BDP$ to select bundle tuples for each candidate distributors. Then $DRM$ use IDM $DRP$ to check if the bidder $i$ can resell bundle $b$ to gain potential utility, and giving out the allocation and the payments of $i$'s critical children. 

\begin{framed}
{\noindent\textbf{Dealer Retail Mechanism (DRM):}} \\
 \noindent\rule{\textwidth}{0.5pt}
 \textbf{Input:} $\theta^\prime$, $\mathcal{\tilde{M}}$, $Pr$, $Rev$

\begin{enumerate}
\item Initialize the remain item set $\hat{M}=M$, the unconsidered participant set $\hat{N}=N\cup \{s\}$ 

    \item Use profile $\theta^\prime$ as the input of graph exploration $CDP$, get $CDP(r^\prime)=(T_c,T_n)$, where $T_c\cup T_n\subseteq \hat{N}$. 
    \item Use Greedy BDP to find resale bundles $b_i$ for each candidate distributor $i\in T_c$. 
    \item For each candidate distributor $i\in T_c$, calculate $(\pi_j,p_j)=DRP(\theta^\prime,T_c,(b_i)_{i\in T_c},Pr,Rev,\mathcal{\tilde{M}})$ for each bidder $j\in D_i(\theta^\prime)$.
    \item For each bidder $i\in T_n$, her allocation is $\pi_i=\emptyset$ and payment is $p_i=0$.
     \item Update $\hat{N}=\hat{N}\setminus \{\bigcup_{i\in T_c} D_i(\theta^\prime)\cup T_n\}$ and  $\hat{M}=\hat{M}\setminus\{\bigcup_{j\notin \hat{N}} \pi_j(\theta)\}$.
    \item Then we get a new sub-problem with remain item set $\hat{M}$ unconsidered participant set $\hat{N}$ and $(\theta^\prime_i)_{i \in \hat{N}}$. Repeat step 2 to step 7 until $\hat{M}=\emptyset$ or $\hat{N}=s$.
    \item For each bidder $i\in \hat{N}\setminus s$, her allocation is $\pi_i=\emptyset$ and payment is $p_i=0$.
\end{enumerate}
\textbf{Output:} 
Allocation $\pi$ and payment $p$.
\end{framed}

We now provide an example of DRM depicted in Figure 2 to demonstrate how our mechanism works. First, we conduct graph exploration $CDP$, and get $T_c=\{1,2\}$, $T_n=\{3,4,8,11\}$. By the definition of revenue function and price function, we know that $Rev(a)=3,Rev(b)=9,Rev(ab)=11,Pr(a)=3,Pr(b)=7$ and $Pr(ab)=11$. For bidder 1, $D_1(\theta)=\{5,9\}$, $b_1^*\{b\},b_1^\circ=\{\emptyset\}$. Since $v_5(b)=6<7$, item $b$ does not belong to bidder 1 or 5. For bidder 2, $D_2(\theta)=\{7\}$, $b_2^*\{a\},b_1^\circ=\{a\}$, then bidder 7 gets item $a$ and pay 3; bidder 2 gets nothing and receive 3. Then, we go back to graph exploration CDP and get $T_c=\{6\}$, $T_n=\{\emptyset\}$. By the definition of revenue function and price function, we know that $Rev(a)=Pr(a)=0$. So bidder 6 will reserve item $a$ and pay nothing.

\begin{figure}
    \centering
    \includegraphics[width=\linewidth]{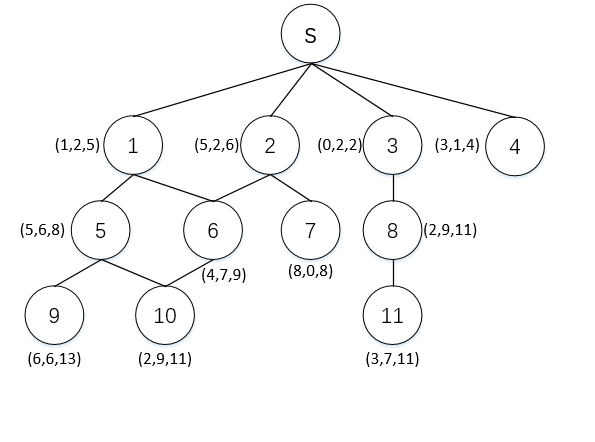}
    \caption{An example of a combinatorial auction on the social network. $Rev(b)=max(v_i(b))_{i\in T_n}$,$Pr(b)$ is the second max $v_i(b)_{i\in T_n}$}
    \label{fig:enter-label}
\end{figure}

Finally, we give the last proposition we need here.

\begin{proposition}
    Graph exploration $CDP$ satisfies candidate distributor
consistency.
\end{proposition}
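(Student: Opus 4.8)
The plan is to exploit that Graph Exploration CDP is a purely structural, layer-by-layer procedure: its only inputs are the reported adjacency (which non-trading node reports which other node) and the reported degrees $|r_j^\prime|$ used to sort each frontier. In all three conditions we change only bidder $i$'s reported set $r_i^\prime$ while fixing every $r_j^\prime$ with $j\neq i$, so first I would localize where $i$'s own report can influence the run. The outgoing report $r_i^\prime$ enters the frontier $R(T_n)$ only in rounds where $i$ already lies in $T_n$, and the degree $|r_i^\prime|$ enters a sorting step only in the single round in which $i$ first appears in a frontier --- call it $i$'s \emph{discovery round} $k^\ast$. Whether and when $i$ appears in a frontier is governed by incoming edges, i.e.\ by which already-non-trading nodes report $i$; since all $r_j^\prime$ with $j\neq i$ are fixed, these incoming edges are identical under the two profiles.

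Next I would prove by induction on the round index that the two runs agree through round $k^\ast-1$ (vacuously so if $i$ is discovered already in step~1). The base case is step~1, whose frontier $r_s^\prime$ and its sorting are untouched because $s\neq i$ and, before discovery, $i\notin r_s^\prime$. The inductive step uses that every node currently in $T_n$ differs from $i$, so its reported neighbor set is unchanged and hence $R(T_n)$ is the same set, while $i$ is not yet being sorted so $|r_i^\prime|$ is irrelevant. With the runs identical through round $k^\ast-1$, at round $k^\ast$ the frontier is the same node set and the top-half threshold it induces is unchanged; the only difference is $i$'s position in the degree-descending order.

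Monotonicity of that position then settles the first two conditions, and I only ever need $i$'s class at $k^\ast$, which is decided before $i$'s report can perturb any later round, so divergence of the runs after $k^\ast$ is harmless. For the first condition, $r_i^{\prime_1}\subseteq r_i^{\prime_2}$ with distinct sets forces $|r_i^{\prime_1}|<|r_i^{\prime_2}|$, so $i$'s rank can only move up; since $i$ lay in the top half under profile~1 and neither the frontier nor its threshold changed, $i$ stays in the top half and is placed in $T_c$, giving $i\in T_c^2$. For the second condition the truthful report $r_i^{\prime_1}=r_i$ is maximal, so any differing $r_i^{\prime_2}$ has $|r_i^{\prime_2}|<|r_i|$ and $i$'s rank can only move down, keeping $i$ in the bottom (non-trading) half, giving $i\in T_n^2$. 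For the third condition, $i\notin T_c^1\cup T_n^1$ means $i$ is never discovered under profile~1, hence never enters a frontier and $r_i^\prime$ never influences the run; the same induction, now carried over all rounds with no discovery round to treat, shows the two runs coincide completely, so $CDP(r^{\prime_1})=CDP(r^{\prime_2})$.

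The hard part is a modeling issue rather than a computation. The loop as written says to partition ``the bidders of $R(T_n)$'' without stating that already-classified nodes stay fixed, so I would first pin down the intended semantics: each iteration classifies only the newly discovered nodes $R(T_n)\setminus(T_c\cup T_n)$ and never re-partitions a node, which is precisely what makes the termination test $R(T_n)\subseteq(T_c\cup T_n)$ meaningful. Under this reading a node's class is permanent once assigned, which is exactly what justifies fixing $i$'s class at $k^\ast$. The only remaining care-point is tie-breaking in the degree sort: it must follow a fixed rule (e.g.\ bidder id) independent of $|r_i^\prime|$, so that a strict change in $i$'s degree moves $i$ monotonically across the threshold rather than ambiguously; I would state this assumption explicitly.
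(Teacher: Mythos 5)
Your proof is correct and rests on the same key observation as the paper's own proof: in Graph Exploration $CDP$, bidder $i$'s report can influence her own classification only through her reported degree $|r_i^\prime|$ (her outgoing edges enter frontier computations only after she is already in $T_n$, and incoming edges are fixed), and her rank in the degree-sorted frontier is monotone in that degree. The paper merely asserts these facts in a few sentences, whereas you actually establish them --- via the induction showing both runs coincide up to $i$'s discovery round, the rank-monotonicity argument for each of the three consistency conditions, and the explicit resolution of the algorithm's ambiguities (classify-once semantics, deterministic tie-breaking) --- so your write-up is a strictly more rigorous rendering of the same argument.
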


\begin{proof}
    For every considered bidder $i$ in graph exploration $CDP$, the only influence factor deciding if $i$ belongs $T_c$ is the number of neighbors she reports $|r_i^\prime|$. When other bidders' reports of neighbors not change $r_j^{\prime_1}=r_j^{\prime_2}$ for $j\neq i$, for $i\in T_c$, reports more neighbor will not change $i\in T_c$. When other bidders' reports of neighbors not change, for the bidders $i\in T_n^1$ and $r_i^{\prime1}=r_i$, bidder $i$ still satisfies $i\in T_n$. When other bidders' reports of neighbors not change, for bidder $i\in N\setminus(T_n^1\cup T_c^1)$, change her report of neighbors will not influence $T_c$ and $T_n$.   
\end{proof}

By Proposition 1, Proposition 2, Proposition 3, and Theorem 1, the following is immediately.
\begin{theorem}
   DRM satisfies IC, IR, and WBB.      
\end{theorem}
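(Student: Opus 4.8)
The plan is to recognize that DRM is nothing more than the instantiation of $DCAF$ obtained by plugging in three concrete components — graph exploration $CDP$, Greedy $BDP$, and IDM $DRP$ — so the statement should follow as an immediate corollary of Theorem 1 once each component is matched against the corresponding sufficient condition. Accordingly, I would not re-derive IC, IR, and WBB from scratch; instead I would verify the three hypotheses of Theorem 1 one at a time and then invoke it.

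First I would collect the structural facts already established in the propositions: Proposition 3 shows that the graph exploration $CDP$ satisfies candidate distributor consistency; Proposition 1 shows that Greedy $BDP$ satisfies resale diffusion monotonicity; and Proposition 2 shows that IDM $DRP$ satisfies revenue consistency. These cover two of the three hypotheses of Theorem 1 directly (the $CDP$ and $BDP$ conditions) together with the revenue-consistency requirement imposed on the single-item mechanism $\mathcal{\tilde{M}}$.

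The one hypothesis of Theorem 1 not supplied by a proposition is that the single-item diffusion mechanism used inside $DRP$ — here IDM — is itself IC and IR. I would cite the original analysis of IDM~\cite{li2017mechanism}, where it is proved to be incentive compatible and individually rational in the single-item diffusion setting. Since $DRP$ treats the whole bundle $b_i$ as a single indivisible item and runs IDM only among the critical children $D_i(\theta^\prime)$, these guarantees transfer verbatim to each local auction. This is the step most worth stating explicitly, as it is the only ingredient of Theorem 1 that DRM inherits from the literature rather than from a proposition proved here.

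With all three hypotheses verified — IDM is IC, IR, and (by Proposition 2) revenue consistent; graph exploration $CDP$ is candidate distributor consistent (Proposition 3); and Greedy $BDP$ is resale diffusion monotone (Proposition 1) — Theorem 1 applies directly to the mechanism generated by $DCAF$ from these components, yielding that DRM is IC, IR, and WBB. The proof is therefore essentially a one-line appeal to Theorem 1, and I expect no genuine obstacle. The only point requiring a small amount of care is confirming that IDM's single-item IC and IR guarantees are the correct fit for the ``single item $=$ bundle $b_i$'' reduction used by $DRP$, which is precisely the reduction the framework is designed to support.
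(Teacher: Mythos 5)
Your proposal is correct and follows essentially the same route as the paper, which derives the theorem immediately from Propositions 1--3 together with Theorem 1. Your explicit note that IDM's IC and IR must be imported from \cite{li2017mechanism} (since this hypothesis of Theorem 1 is not covered by any proposition) is a small point the paper leaves implicit, but it does not change the argument.
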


Then, we propose the definition of existing positive incentive for non-winner which means the mechanism provides positive incentives for some non-winners and prove the mechanisms designed by $DCAF$ satisfy existing positive incentive for non-winner.

\begin{definition}[Existing Positive Incentive for Non-winner]
    A diffusion auction mechanism $\mathcal{M}=(\pi,p)$ is \textbf{existing positive incentive for non-winner (EPI4NW)} if exists a profile $\theta^\prime$ in which exists $i\in N$, $i$'s payment $p_i<0$ when her allocation $\pi_i=\emptyset$.
    
\end{definition}

EPI4NW means that the mechanism $\mathcal{M}$ can bring positive utility to the bidders winning no item. 

\begin{theorem}
   DRM satisfies EPI4NW.      
\end{theorem}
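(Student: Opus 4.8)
The plan is to prove EPI4NW by explicit construction, since the definition only requires the \emph{existence} of a single report profile $\theta^\prime$ together with a single bidder $i$ satisfying $\pi_i(\theta^\prime)=\emptyset$ and $p_i(\theta^\prime)<0$. The structural source of such an event is the resale branch of the diffusion resale process (step 2 of $DRP$): whenever the local IDM auction held by a candidate distributor $i\in T_c$ raises enough revenue to clear the resale threshold, i.e.\ $\tilde{\mathcal{R}}(\theta^\prime)\geq Rev(\theta^\prime,b_i^*)$, the process assigns $\pi_i(\theta^\prime)=\emptyset$ to $i$ while charging her $p_i(\theta^\prime)=Pr(\theta^\prime,b_i^*)-Rev(\theta^\prime,b_i^*)$.

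First I would invoke the defining inequality of the resale revenue function, namely $Rev(\theta^\prime_{T_n},b)>Pr(\theta^\prime_{T_n},b)$ for every bundle $b$. Substituting $b=b_i^*$ into the payment expression above gives $p_i(\theta^\prime)=Pr(\theta^\prime,b_i^*)-Rev(\theta^\prime,b_i^*)<0$ while $\pi_i(\theta^\prime)=\emptyset$. Hence, in any profile that reaches this branch for some candidate distributor, that distributor is a non-winner who receives a strictly positive transfer, which is exactly the EPI4NW condition.

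It then remains to confirm that this branch is actually reachable by some valid instance of $DRM$. Here I would exhibit a concrete witness, reusing the instance of Figure~2: the graph exploration $CDP$ returns $T_c=\{1,2\}$, the Greedy $BDP$ assigns candidate distributor $2$ the resale bundle $b_2^*=\{a\}$, and her critical child $7$ values $\{a\}$ highly enough that the local IDM revenue meets $Rev(\theta^\prime,\{a\})$. The mechanism then allocates $a$ to bidder $7$ and leaves bidder $2$ empty-handed while paying her a positive amount, so $\pi_2=\emptyset$ and $p_2<0$. Alternatively, a minimal three-node network (the seller, one candidate distributor, and one critical child with a large valuation, plus at least one non-trading bidder fixing $Pr$ and $Rev$) achieves the same conclusion.

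The main obstacle is not any inequality but the bookkeeping of \emph{reachability}: one must check that the chosen valuations and neighbor reports genuinely make the graph exploration $CDP$ place the intended bidder into $T_c$, make the Greedy $BDP$ assign the intended resale bundle $b_i^*$, and make the local IDM revenue satisfy $\tilde{\mathcal{R}}(\theta^\prime)\geq Rev(\theta^\prime,b_i^*)$ so that the resale branch fires rather than the reserve branch. Once a single such profile is verified, the strict inequality $Rev>Pr$ supplies the negative payment automatically, and the theorem follows.
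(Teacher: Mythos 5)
Your proof is correct in substance, but it reaches EPI4NW through a different witness than the paper. The paper inherits the property from IDM itself: inside the local auction run by a candidate distributor, a critical diffusion node $i \in C_w \setminus \{w\}$ of the local winner can receive payment $v_i^*(b) - v_{i+1}^*(b) < 0$ while being allocated nothing, and since DRM embeds IDM in its $DRP$, such a profile witnesses EPI4NW. You instead make the candidate distributor herself the witness: when the resale branch of $DRP$ fires, she gets $\pi_i(\theta^\prime) = \emptyset$ and $p_i(\theta^\prime) = Pr(\theta^\prime, b_i^*) - Rev(\theta^\prime, b_i^*) < 0$ by the strict inequality built into the definition of the resale revenue function. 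Your route buys something the paper's does not: it is independent of the particular single-item mechanism plugged into $DRP$, so it actually establishes EPI4NW for \emph{every} mechanism generated by DCAF (which is what the introduction claims), not just for DRM; the paper's argument is tied specifically to IDM's diffusion rewards. Both arguments share the same unstated burden, which you at least name explicitly: the resale branch ($\tilde{\mathcal{R}}(\theta^\prime) \geq Rev(\theta^\prime, b_i^*)$) must be reachable --- the paper needs this too, since in the reserve branch the IDM payments are discarded and all critical children pay $0$. One caution on your concrete witness: with the numbers actually given for Figure~2, $Rev(a) = Pr(a) = 3$, so under the formal $DRP$ payment rule the distributor's payment there is $0$, not negative (the ``receive 3'' in the example text is inconsistent with the stated formula); since DRM instantiates $Rev$ and $Pr$ as the first and second prices among $T_n$, your witness must have these strictly distinct, e.g., your minimal-network alternative with two non-trading bidders holding different values for the resale bundle. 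With such a tie-free profile exhibited, your argument is complete.
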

\begin{proof}
    In the description of IDM, we can find that there is a profile $\theta^\prime$ in which exists $i\in C_w\setminus \{w\}$, $i$'s payment $p_i(\theta^\prime)=v_i^*(b)-v_{i+1}^*(b)<0$ although her allocation $\pi_i=\emptyset$. 
    Since DRM use IDM DRP to resale bundles, DRM satisfies EPI4NW.
\end{proof}

Then, we compare the properties of DRP with previous works. In \ref{tab:mechanism}, we show the properties of some previous classic mechanisms e.g. VCG, IDM, MUDAN and so on. We can find that DRM is the only mechanism in combinatorial setting can satisfy IC, IR, WBB, and EPI4NW until now.

\begin{table}[t]
\centering 

\begin{tabular}{|c|c|c|c|c|c|c}
    \hline Mechanism & Scene & IC & IR & WBB & EPI4NW \\ 
    \hline VCG & Single-item & \ding{51} & \ding{51} & \ding{53} & \ding{51} \\ 
    \hline IDM & Single-item & \ding{51} & \ding{51} & \ding{51} & \ding{51} \\ 
    \hline GIDM & Multi-item & \ding{53} & \ding{51} & \ding{51} & \ding{51} \\ 
    \hline DNA & Multi-item & \ding{53} & \ding{51} & \ding{51} & \ding{53} \\ 
    \hline LDM-Tree & Multi-item & \ding{51} & \ding{51} & \ding{51} & \ding{51}\\ 
    \hline MUDAN & Multi-item & \ding{51} & \ding{51} & \ding{51} & \ding{53} \\ 
    \hline MetaMSN-m & Combinatorial & \ding{51} & \ding{51} & \ding{51} & \ding{53} \\ 
    \hline \textcolor{red}{DRM}  &\textcolor{red}{Combinatorial} & \textcolor{red}{\ding{51}} & \textcolor{red}{\ding{51}} & \textcolor{red}{\ding{51}}& \textcolor{red}{\ding{51}} \\ 
    \hline
 
\end{tabular}
\caption{Comparison on diffusion mechanisms over social networks.}
\label{tab:mechanism}
\vspace{-1.0em}
\end{table}

\section{Conclusion}
Currently, there is a significant gap in the design of mechanisms for diffusion auctions involving heterogeneous multi-items. We propose a novel framework, diffusion combinatorial auction framework (DCAF), which leverages single-item diffusion mechanisms to design combinatorial auctions that satisfy incentive compatibility, individual rationality, and weak budget balance conditions. To the best of our knowledge, this work is the first work in the design of diffusion auction mechanisms for  combinatorial setting on social networks. Additionally, we introduce the concept of no positive incentives for non-winners and demonstrate that our proposed mechanisms fulfill this criterion. We can find that DRM is the only mechanism in combinatorial setting can satisfy IC, IR, WBB, and EPI4NW.



\bibliographystyle{ACM-Reference-Format} 
\bibliography{combinatorial}


\end{document}